\documentclass[aps,prx,10pt,twocolumn,superscriptaddress,showpacs,footinbib,floatfix,subfigure,longbibliography,amsmath,amssymb]{revtex4-1}
\usepackage{graphicx}
\usepackage{hyperref}
\usepackage[normalem]{ulem}

\newtheorem{theorem}{Theorem}[section]
\newtheorem{lemma}[theorem]{Lemma}

\newenvironment{proof}[1][Proof]{\begin{trivlist}
\item[\hskip \labelsep {\bfseries #1}]}{\end{trivlist}}

\newcommand{\qed}{\nobreak \ifvmode \relax \else
      \ifdim\lastskip<1.5em \hskip-\lastskip
      \hskip1.5em plus0em minus0.5em \fi \nobreak
      \vrule height0.75em width0.5em depth0.25em\fi}
      
\newcommand{\ci}{\cos\frac{\theta_i}{2}}
\newcommand{\si}{\sin\frac{\theta_i}{2}}

\newcommand{\cf}{\cos\frac{\theta_f}{2}}
\renewcommand{\sf}{\sin\frac{\theta_f}{2}}

\begin{document}
\title{When amplification with weak values fails to suppress technical noise}
\author{George~C.~Knee}
\email{george.knee@materials.ox.ac.uk}
\affiliation{Department of Materials, University of Oxford, Oxford OX1 3PH, United Kingdom}
\author{Erik~M.~Gauger}
\affiliation{Centre for Quantum Technologies, National University of Singapore, 3 Science Drive 2, Singapore 117543}
\affiliation{Department of Materials, University of Oxford, Oxford OX1 3PH, United Kingdom}
\date{\today}    
\begin{abstract}\noindent
The application of postselection to a weak quantum measurement leads to the phenomenon of weak values. Expressed in units of the measurement strength, the displacement of a quantum coherent measuring device is ordinarily bounded by the eigenspectrum of the measured observable. Postselection can enable an interference effect that moves the average displacement far outside this range, bringing practical benefits in certain situations. Employing the Fisher information metric, we argue that the amplified displacement offers no fundamental metrological advantage, due to the necessarily reduced probability of success. Our understanding of metrological advantage is the possibility of a lower uncertainty in the estimate of an unknown parameter with a large number of trials. We analyze a situation in which the detector is pixelated with a finite resolution, and in which the detector is afflicted by random displacements: imperfections which degrade the fundamental limits of parameter estimation. Surprisingly, weak-value amplification is no more robust to them than a technique making no use of the amplification effect brought about by a final, postselected measurement.
\end{abstract}
\maketitle
\section{Introduction}
In recent years the Aharonov, Albert, and Vaidman (AAV) effect~\cite{AharonovAlbertCasher1987,AharonovAlbertVaidman1988,DuckStevensonSudarshan1989} has received much attention. The phenomenon arises through a combination of both i) a weak quantum measurement of an \emph{initial state} of a `system' by a quantum coherent `meter'  and ii) judicious postselection of the system into an unlikely \emph{final state}. A salient feature of the phenomenon is known as `weak-value amplification' (WVA): one may, by fine control of experimental parameters, arrange for an anomalously large \emph{average} displacement of the meter wavefunction -- but this only occurs infrequently.

The pre- and postselected ensemble is characteristic of the `two state-vector approach',  which attempts to restore time symmetry to quantum mechanics~\cite{AharonovBergmannLebowitz1964}. The AAV effect may thus be of fundamental importance, although it is not exclusively a quantum one: especially in optical scenarios it can be thought of classically~\cite{AielloWoerdman2008}. Applications of AAV's work continue to excite the quantum information field: for example in the estimation of unknown quantum states, in testing Leggett-Garg inequalities, and in testing uncertainty and complementarity relations~\cite{LundeenSutherlandPatel2011,LundeenBamber2012,SalvailAgnewJohnson2013,RozemaDarabiMahler2012,DresselBroadbentHowell2011,WestonHallPalsson2013}. 

Here we concentrate on the foremost technological application of the AAV effect: in the estimation of the small constant (known as an interaction parameter) found to couple a suitably defined system and meter.  The meter wavefunction encodes information about the system through their mutual interaction. It is often claimed that a larger-than-usual displacement of the wavefunction will be of use when technological constraints limit one's ability to detect very small interaction parameters. Experiments have been performed that suggest exactly this: the most prominent claim a suppression of technical issues such as pointing stability, detector saturation, finite resolution, unwanted displacements and noise with long correlation times~\cite{RitchieStoryHulet1991,HostenKwiat2008,DixonStarlingJordan2009,BrunnerSimon2010,GorodetskiBliokhStein2012}. In this work we show that, for the most ostensible technical issues, namely random transverse displacements and pixelation, such an advantage only prevails whilst sub-optimal estimation procedures are employed. If the fundamental limit on the uncertainty in the estimate of the interaction parameter is attainable (by optimal estimation), the advantage disappears. We consider the performance of a standard technique which matches or betters WVA by using the same measurement strength, but without using postselection. 

Previously, we have shown that weak-value amplification techniques offer no advantage for a parameter estimation problem with a qubit meter, even under decoherence~\cite{KneeBriggsBenjamin2013}. By contrast, here we consider the canonical WVA scenario with a continuous degree of freedom acting as the meter. Importantly, we include a treatment of technical imperfections. This is the most significant difference between our work and a recent paper by Tanaka and Yamamoto~\cite{TanakaYamamoto2013}, who discuss the role of postselection in WVA metrology.

In Section \ref{model} of this article we introduce our physical model, and motivate the use of postselection to enable weak-value amplification. In Section \ref{metric} we introduce the Fisher information metric and maximum-likelihood estimation: here we also define a standard approach to parameter estimation which makes no use of postselection. Section \ref{ideal detector} examines a noise-free scenario, comparing WVA to our standard approach. In Section \ref{failedpostselection} we consider improvements to weak-value amplification by making full use of any discarded data. In Sections \ref{pixelation} and \ref{jitter} we allow detection to suffer from pixelation and jitter, respectively. For each imperfection we include worked examples with Gaussian meter wavefunctions, and allow for both real and imaginary weak-values. Sections \ref{exactcalculations} and \ref{exactcalculations_failedpostselection} treat the situation without making approximations. A short discussion follows in Section \ref{discussion}. 
\begin{figure}
\includegraphics[width=\linewidth]{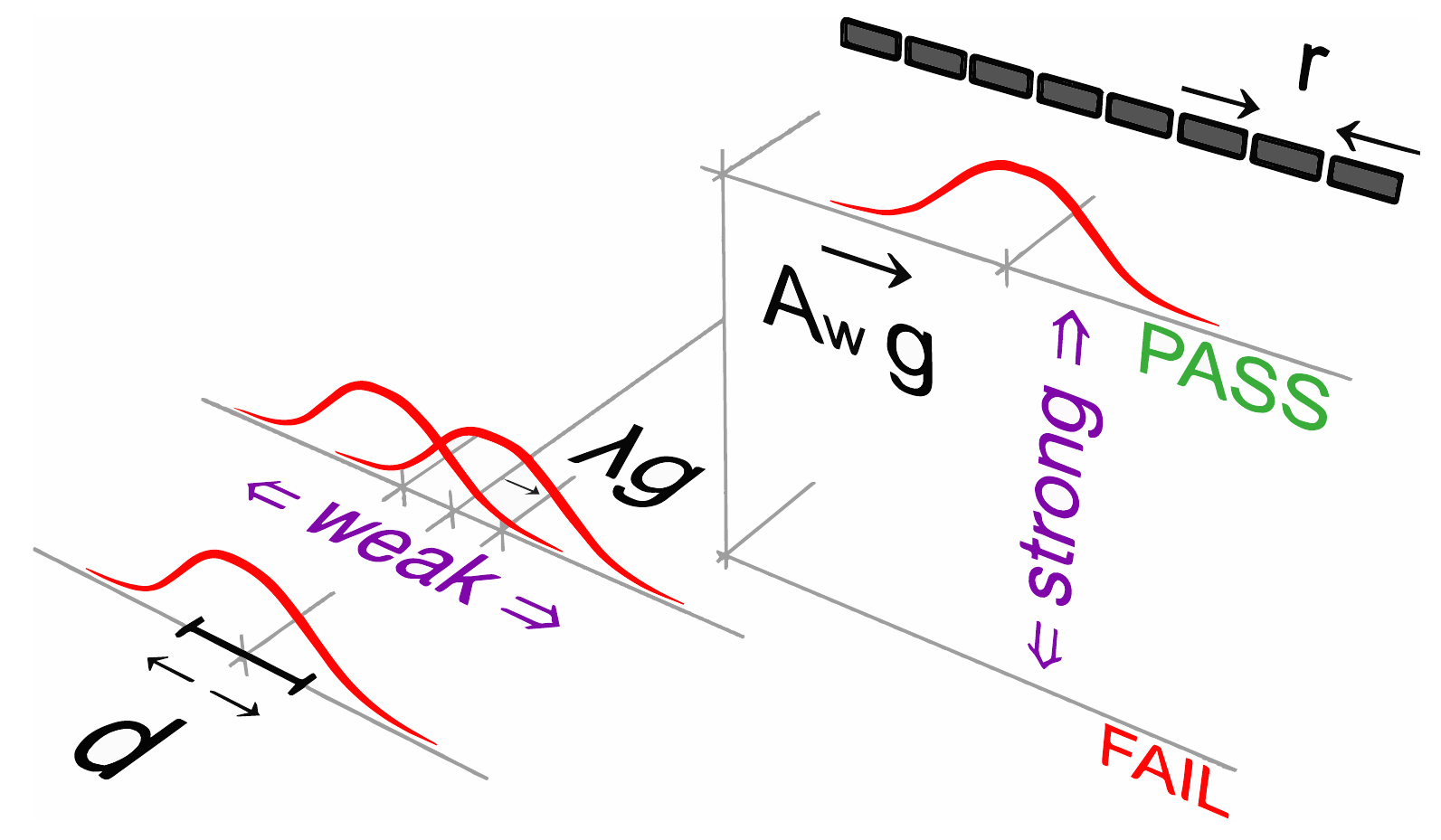}
\caption{\label{AAVsetup}The AAV setup. A particle's spatial wavefunction is weakly coupled to an internal observable $\mathbf{A}$, causing it to undergo a small lateral shift conditional on the internal state. It is then strongly post-selected into the eigenstate of another observable $\mathbf{B}$ and allowed to impinge on a array of pixels with width $r$. By tuning certain parameters one can induce larger-than-usual average displacements at the detection stage. The Figure illustrates the $\textrm{dim}(\mathcal{H})=2$ case.}
\end{figure}
\section{Model}
\label{model}
We imagine a scenario as depicted in Fig.~\ref{AAVsetup}. Begin with a particle described by a product of its internal state (defined on a finite-dimensional Hilbert space $\mathcal{H}$) and its transverse spatial wavefunction (a continuous degree of freedom defined on $L_2(\mathbb{R}$)). Expanding the initial state $|i\rangle$ in the eigenbasis of an internal \emph{control observable} $\mathbf{A}|a_j\rangle=\lambda_j|a_j\rangle$ one has
\begin{equation}
|i\rangle|\psi\rangle=\sum_jc_j|a_j\rangle|\psi\rangle \,,
\end{equation}
where $c_j$ are an appropriate set of normalised amplitudes and $|\psi\rangle$ is the normalised transverse component of the particle's spatial degree of freedom. We can expand the spatial degree of freedom, for example, in the $x$-basis $|\psi\rangle = \int \psi(x)\textrm{d}x|x\rangle$. Allow the particle to undergo a displacement, transverse to its direction of propagation and dependent on the internal state. The time evolution operator is
\begin{equation}
U=\textrm{e}^{-ig\mathbf{A}\hat{k}_x} \,,
\label{hamil}
\end{equation}
where $\hat{k}_x$ is the operator corresponding to the $x$-component of the particle's momentum, and $g$ is the coupling constant or `interaction parameter'~\cite{HofmannGogginAlmeida2012}. 
The dynamics is unitary since $\mathbf{A}$ is Hermitian and has real eigenvalues. Thus it generates translations in the meter variable conjugate to $\hat{k}_x$. In this case we have 
\begin{equation}
|a_j\rangle \int \psi(x) \textrm{d}x |x\rangle \rightarrow|a_j\rangle \int \psi(x- \lambda_j g)\textrm{d}x |x\rangle \,,
\label{entangled}
\end{equation}
i.e.~displacements in $x$. The analyses in this paper apply equally if $\hat{k}_x$ and $\hat{x}$ are interchanged, so that the interaction induces shifts in momentum space.

Under this dynamics, each of the eigenstates of the control observable becomes correlated to a separate wavefunction, which, up to the proportionality constant $g$, is peaked around the corresponding eigenvalue. When $\lambda_j g$ is much larger than the `width' of $\psi(x)$, this constitutes a strong measurement of the internal state of the particle by its spatial wavefunction. Formally, this occurs when the overlap $\mathcal{O}_{ij}:=\int \psi^*(x- \lambda_i g)\psi(x- \lambda_j g)\textrm{d}x$ between each pair of shifted wavefunctions is vanishingly small~\footnote{Here `measurement' is a unitary (hence reversible) process driven by the Schr{\"o}dinger Equation. The term is often also used to describe the (irreversible) projection onto a particular quantum state, which is associated with wavefunction collapse. We use `detection' for the latter meaning. See e.g.~Ref.~\cite{Neumann1996}. Related semantic issues are discussed in~\cite{Leggett1989,*AharonovVaidman1989}.}. When $\lambda_j g$ is relatively small and the wavefunctions are no longer well resolved, the measurement is said to be \emph{weak}~\cite{MelloJohansen2010}. 

Now allow the particle to undergo another internal-state-dependent shift, this time in the $y$-direction and at full strength, i.e.~with an interaction Hamiltonian $\propto\mathbf{B}\hat{k}_y$. The product of coupling constant and interaction time should be large enough such that the eigenstates $|f\rangle$ of observable $\mathbf{B}$ are well separated into distinct `branches'.

By selecting only one branch for investigation, one \emph{postselects} the particle into a single final internal eigenstate of $\mathbf{B}$.  Any such state can be expanded in the basis of $\mathbf{A}$, $ |f\rangle=\sum_j c'_j |a_j\rangle$.  The final state in the `success' branch will then be given by 
\begin{equation}
|\psi_{\textrm{wva}}\rangle=\frac{1}{\sqrt{q}}\sum_jc_j \bar{c}'_j \int \psi(x-\lambda_jg)\textrm{d}x |x\rangle\,,
\label{exactwf}
\end{equation}
where the normalization $q$ represents the probability of successful postselection, and depends on $g$ through $\mathcal{O}_{ij}$. 

Aharonov, Albert and Vaidman showed that, when $g$ is small, the effect on the wavefunction is captured by a single quantity $A_w=\langle f|\mathbf{A}|i\rangle/\langle f | i \rangle$~\cite{AharonovAlbertVaidman1988}. This is their well known `weak value', and is generally a complex number~\cite{Jozsa2007}. They arrive at this conclusion as follows:
\begin{align}
|\psi_w\rangle&=&\langle f| \textrm{e}^{-ig\mathbf{A} \hat{k}_x}|i\rangle|\psi\rangle&\\
&=& \left(\langle f|i\rangle - ig\langle f |\mathbf{A}|i\rangle \hat{k}_x+\ldots\right)|\psi\rangle&\\
&\approx&\langle f|i\rangle\left(1 - igA_w \hat{k}_x\right)|\psi\rangle& \,,
\end{align}
so that to linear order in $g$, the distribution over $x$ becomes
\begin{align}
P_{\textrm{wva}}(x)&:=|\langle x|\psi_{\textrm{wva}}\rangle|^2\nonumber\\
&\approx|\psi(x-g\textrm{Re}(A_w))|^2 \,,
\label{shiftx}
\end{align}
where we assume only that the initial spatial wavefunction is real-valued but do not make assumptions about its shape~\cite{SusaShikanoHosoya2012,*Lorenzo2013a,*SusaShikanoHosoya2013a}. If we take the initial momentum space wavefunction to be a Gaussian $\tilde{\psi}(k_x)\propto\textrm{e}^{-k_x^2/4\Delta_{k_x}^2}$, the distribution over $k_x$ transforms to~\cite{KofmanAshhabNori2012}
\begin{align}
\tilde{P}_{\textrm{wva}}(k_x)&:=|\langle k_x|\psi_{\textrm{wva}}\rangle|^2\nonumber\\
&\approx|\tilde{\psi}(k_x-2g\Delta_{k_x}^2\textrm{Im}(A_w))|^2.
\label{shiftkx}
\end{align}
Under these assumptions, the shape of the meter wavefunction is not changed, and the perturbation effect can be modeled through the simple shifts above: for a detailed discussion of departures from this behavior, see Ref~\cite{KofmanAshhabNori2012}. We refer to the conditions necessary for these expressions to be accurate as the AAV approximation.
The motivation behind using postselection is that, where the AAV approximation is valid and when $\langle f | i \rangle\rightarrow 0 $,  the average displacement is much larger than is otherwise possible~\footnote{although not arbitrarily larger~\cite{DuckStevensonSudarshan1989}}. The large displacement is often claimed to be an advantage, especially for overcoming sources of technical noise.

A simple thought experiment is seductive: imagine that ordinarily the shift induced by the weak measurement $\lambda_j g$ is smaller than the width of a single pixel in a digital sensor (such as a CCD)~\footnote{or indeed smaller than the width of a photodiode stepped through finite displacements~\cite{RitchieStoryHulet1991}}. An amplified shift, if large enough, could then be distinguished from no shift at all, whereas previously no such distinction was possible~\cite{BrunnerSimon2010}. This thought experiment presupposes an approach to parameter estimation where only the mean of the probability distribution is important. However, the mean is not always a~\emph{sufficient} statistic. As we shall see, if one makes use of the entire statistical distribution to inform one's estimate of $g$, then a larger displacement is not \emph{per se} advantageous.

\section{Metric}
\label{metric}
An efficient approach to parameter estimation is \emph{maximum likelihood}~\cite{Paris2009}, which has previously been studied in connection with weak values in Refs.~\cite{Hofmann2011,KneeBriggsBenjamin2013,VizaMartinez-RinconHowland2013}. It involves inverting the statistics of a measurement record to find an estimate for a parameter of interest. No other estimation technique can give higher confidence asymptotically (when the number of trials is large)~\cite{Fisher1925}. The ultimate precision of this estimation scheme, and hence of any estimation scheme, is captured by the Fisher information: a powerful tool for the appraisal of parameter estimation protocols through their resultant parametric probability distributions. The set of probability distributions conditioned by the unknown parameter define a \emph{statistical model}. The Fisher information is a functional on such conditional probability distributions, and is defined:
\begin{equation}
F_g[P(s|g)]=\int_{-\infty}^\infty \frac{\left(\partial_g P(s|g)\right)^2}{P(s|g)} \textrm{d}s\, .
\end{equation}
It has the properties of an \emph{information}: it is positive and increases as $P(s|g)$ (the probability of measuring $s$ given a value of $g$) becomes less smooth. In this sense it is an inverse measure of entropy. Here $s$, representing an element of the sample space of the statistical model, can be thought of either as a position or momentum variable -- we will consider both $s=x$ and $s=k_x$ in the following. The derivative $\partial_g\equiv\partial/\partial g$ is taken with respect to the parameter of interest -- in this context we are interested in the interaction parameter $g$, but for other parameters we write, e.g.,~$F_s[\bullet]$. 

The Cram\'{e}r-Rao bound limits, from below, the variance in an estimate $\tilde{g}$ of an unknown parameter $g$ given the observed statistics~\cite{Cramer1946}
\begin{equation}
\textrm{Var}( \tilde{g})\geq \frac{1}{NF_g}\, .
\end{equation}
The bound is given by the inverse of the product of the number of trials $N$ with the Fisher information, making the denominator a good measure of confidence in the unknown parameter -- higher is better. The question about WVA inspired metrology then reduces to this: is there more information about $g$ contained in the postselected distribution over $s$ than when no postselection is performed? 

We define a \emph{standard} strategy as one making no use of a final measurement, and hence no use of postselection. In fact, our benchmark measurement strategy completely ignores the `system' degree of freedom. Beginning from Eq.~(\ref{entangled}), one traces over the internal degree of freedom and measures the particle in the $x$ basis to give
\begin{align}
P_{\textrm{std}}(x)=&\sum_j |c_j|^2|\psi(x-\lambda_jg)|^2 \,.
\end{align}
The weighted sum is a convex combination: by controlling the $c_j$ one can mix the probability densities corresponding to the different eigenvalues of $\mathbf{A}$. An immediate optimization for the standard strategy presents itself: The Fisher information metric is convex~\cite{Cohen1968}, meaning that any such mixing of probability distributions will be sub-optimal. One should therefore choose the $c_j$ to filter the probability distribution with eigenvalue $\lambda_*$ that gives the highest Fisher information. Now $F_g[P{_\textrm{std}}]=F_g[|\psi(x-\lambda_*g)|^2]$. 

The key figure of merit that we shall be concerned with in this paper, is the Cram\'{e}r-Rao bound of the WVA strategy, and how it compares to that of the standard strategy. In the limit of $N\rightarrow \infty$, their ratio is equal to $qF_g[P_\textrm{wva}]$ to $F_g[P{_\textrm{std}}]$. The former is corrected by the probability of successful postselection. This is in accordance with the additivity of the Fisher information over $N$ independent events (of which only $qN$ are available to the WVA strategy). While this quantity may not do justice to all notions of metrological performance, it provides a fundamental bound with a clear operational meaning, which holds when the number of trials $N$ is sufficiently large.
\section{Ideal Detector}
\label{ideal detector}
Although the aim of this paper is to study certain detector imperfections, to begin we consider a stable detector having infinite resolution.

It is instructive to investigate the Fisher information of a general distribution $P(s')=|\psi(s')|^2$ whose argument is shifted $s'= s-\nu g$. Consider the derivative w.r.t.~$g$, which by the chain rule
\begin{align}
\partial_g P(s-\nu g )&= \frac{\partial s'}{\partial g}\frac{\partial P(s' )}{\partial s'} \nonumber \\ 
&=- \nu \partial_{s'}P(s') \,,
\label{chainrule}
\end{align}
and so 
\begin{align}
F_g[P(s-\nu g)]&=\int \nu^2 \frac{(\partial_{s'}P(s'))^2}{P(s')}\textrm{d}s \nonumber \\
&=\nu^2 F_s[P(s)] \,,
\label{shifter}
\end{align}
as $\textrm{d}s=\textrm{d}s'$ under the change of integration variable.
Notably, the information is \emph{independent of the size of the parameter} $g$~\cite{Frieden2004}. The multiplier $\nu$, however, acts as a sort of velocity for the statistical model: it represents the rate at which the probability density changes as $g$ is swept through parameter space. The independence of $F$ on $g$ is an important difference to other measures of metrological performance: for example the signal-to-noise ratio given in~\cite{BarnettFabreMaitre2003} and employed by Refs.~\cite{StarlingDixonJordan2009,DixonStarlingJordan2009,Kedem2012} shows a dependence on the size of the shift itself rather than its rate of change.

Equation (\ref{shifter}) implies that in the standard strategy $\nu$ should be given by the largest eigenvalue $\lambda_*$. The standard strategy, then, prepares the system in the eigenstate of $\mathbf{A}$ having the largest eigenvalue, and performs \emph{no} final measurement on the system. 

In the WVA strategy, a final measurement gives rise to the AAV effect, and by Eq.~(\ref{shiftx}) or (\ref{shiftkx}) one of the branches experiences a statistical velocity given by an anomalously large effective eigenvalue $\textrm{Re}(A_w)$ or $2\Delta_{k_x}\textrm{Im}(A_w)$, respectively. By assuming that the experimenter chooses either to measure in position space or momentum space, and therefore will choose either $A_w=\textrm{Re}(A_w)$ or $A_w=\textrm{Im}(A_w)$, respectively, one can replace the real or imaginary part of the weak value with its modulus. Then taking a ratio of the WVA strategy to the standard strategy gives 
\begin{equation}
\frac{F_g[P_{\textrm{wva}}]}{F_g[P{_\textrm{std}}]}=\frac{ |A_w|^2}{ \lambda_*^2}
\label{realfisher}
\end{equation}
for real weak values or 
\begin{equation}
\frac{F_g[\tilde{P}_{\textrm{wva}}]}{F_g[P{_\textrm{std}}]}= \frac{|A_w|^2 }{ \lambda_*^2}\frac{ 4\Delta_{k_x}^4F_{k_x}[\tilde{P}(k_x)]}{ F_x[P(x)]}
\label{imagfisher}
\end{equation}
for imaginary weak values. Recall that $A_w$ can be much larger than $\lambda_*$. It appears then, that in the postselected runs, not only is the average displacement of the meter amplified but so too the measure of sensitivity as captured by the Fisher information. When pre- and postselection are tuned close to orthogonal this triggers destructive interference in the meter, causing cancellation where the superposed meter states overlap most. This leaves only a wavefunction where the meter states overlap least, i.e. where the meter states are most distinguishable. It is therefore not surprising that the postselected wavefunction can give better performance, because the Fisher information is a measure of distinguishability of neighboring meter states as $g$ is varied. 

The $ 4\Delta_{k_x}^4F_{k_x}[\tilde{P}(k_x)]/ F_x[P(x)]$ term in Eq.~(\ref{imagfisher}) must be carefully interpreted. Note that we continue to use the standard strategy information as a benchmark, which necessarily involves detection in the $x$-basis. If we take $P(x)$ and $\tilde{P}(k_x)$ to be generated from Gaussian wavefunctions, we obtain by Heisenberg's relation~\cite{Frieden1992}	
\begin{equation}
\frac{4\Delta_{k_x}^4F_{k_x}[\tilde{P}(k_x)]}{ F_x[P(x)]}=\frac{4\Delta_{k_x}^2}{F_x[P(x)]}=1.
\label{HUP}
\end{equation}
Employing an imaginary weak value then offers no advantage over a real one.

Of course the benefit due to `amplification' is not without cost: we must correct the amplification factor for the reduced probability of data being selected $|\langle f|i\rangle|^2$ -- a necessary drawback of WVA. Then one finds 
\begin{align}
\frac{|A_w|^2}{\lambda_*^2}|\langle f|i\rangle|^2=\frac{|\langle f |\mathbf{A}|i\rangle|^2}{\lambda_*^2} \leq 1.
\label{ineq}
\end{align}
The inequality follows by expanding $|f\rangle$ and $|i\rangle$ in the eigenbasis of $\mathbf{A}$ and applying the Cauchy-Schwarz inequality:  
\begin{align}
|\langle f |\mathbf{A}| i \rangle |^2 = \left\vert \sum_k \bar{c}'_k \lambda_k  c_k \right\vert^2 &\leq \sum_n |c'_n |^2 \sum_m |  \lambda_m c_m |^2 \nonumber \\ 
 &= \sum_m  \lambda_m^2 | c_m |^2\nonumber\\
 &=\langle \mathbf{A}^2 \rangle \leq \lambda_*^2 \, .
 \label{eq:inequality}
\end{align}
Thus, combining Eqs.~(\ref{realfisher}), or (\ref{imagfisher}) and (\ref{HUP}), with (\ref{ineq}) we have
\begin{equation}
\frac{qF_g[P_{\textrm{wva}}]}{F_g[P{_\textrm{std}}]}\leq1;
\end{equation}
at least under the AAV approximation and in the absence of noise, there can be no advantage through WVA for the purpose of estimating $g$. 

The above conclusions are consonant with a very recent result of Tanaka and Yamamoto~\cite{TanakaYamamoto2013}. In their work, the authors calculate the quantum Fisher information~\footnote{ which is a maximal Fisher information varying over all measurement POVMs} of the weak-value amplified wavefunction (suitably postselected), showing that it is never greater than in \emph{the joint system-meter state after the weak measurement}. By contrast, our approach compares against a benchmark that ignores, or `traces-out' the system state during the detection.
\section{Failed postselection}
\label{failedpostselection}

One might wonder whether the large number of discarded events occurring at the postselection stage may be of use. One recent proposal suggested `recycling' the rejected particles for another run~\cite{DresselLyonsJordan2013}. By contrast, in a previous paper we considered looking at the statistics in both `pass' and `fail' branches of the postselection. Calculating the sum of Fisher informations in both branches, we found that (although the Fisher information is by definition greater) the ratio of informations remains at unity or below~\cite{KneeBriggsBenjamin2013}. Here we consider the use of data in both branches as well as their correlations with a `which branch variable' (call this $b$). In other words, we upgrade the WVA strategy; replacing postselection with a final measurement and allowing any postprocessing of the result. The ultimate performance of this class of estimation strategies will be given by the Fisher information calculated in the \emph{joint} probability distribution over $b,s$:
\begin{equation}
F_g[\mathcal{P}(b,s)]:=\sum_b\int \textrm{d}x \frac{(\partial_g\mathcal{P}(b,s))^2}{\mathcal{P}(b,s)}.
\end{equation}
A joint distribution generally contains more structure (and more information) than is revealed by its marginal distributions. Begin with
\begin{equation}
\mathcal{P}(b,x)=\sum_f \delta_{fb} q_b |\psi(x-\textrm{Re}(A_w^b)g)|^2,
\end{equation}
or
\begin{equation}
\tilde{\mathcal{P}}(b,k_x)=\sum_f \delta_{fb} q_b |\tilde{\psi}(k_x-2\Delta_{k_x}\textrm{Im}(A_w^b)g)|^2,
\end{equation}
once more employing the AAV approximation: $b$ indexes the outcomes of the final, strong measurement. In this regime, $q$ does not depend on $g$, although in an real experiment the postselection probability itself carries information about the interaction parameter~\cite{ZhangDattaWalmsley2013}. Because of the Kronecker $\delta$, each output branch can be considered independently, leading to 
\begin{align}
F_g[\mathcal{P}(b,x)]&=\sum_f q_f |A_w^f|^2 F_x[P(x)]\nonumber \\
&= \langle \mathbf{A}^2\rangle F_x[P(x)]
\end{align}
for real weak values, or indeed
\begin{align}
F_g[\tilde{\mathcal{P}}(b,k_x)]&=\sum_f q_f |A_w^f|^2 (2\Delta_{k_x})^2 \nonumber \\
&= \langle \mathbf{A}^2\rangle4\Delta_{k_x}^2
\end{align}
for imaginary weak values. The last step is in strong analogy with the usual resolution of probability-weighted weak-values to the expectation value $\sum_f q_f A_w^f=\langle \mathbf{A} \rangle $~\cite{ShikanoHosoya2010}. Notice that once the initial state of system and meter are chosen, the expected value of the square of the control observable along with the shape of initial meter wavefunction set a fixed amount of information in the joint system-meter state after their interaction. The final strong measurement may distribute the information (which we should think of as a conserved quantity) among the output branches in various ways. The choice of basis for the final measurement may lead to a balanced distribution if, for example, the final measurement basis is unbiased with respect to the initial state. If on the other hand the basis contains an element which is almost orthogonal to the initial state, an arbitrarily large portion of the information may be concentrated into the corresponding branch of the final measurement. We repeat this argument in Section~\ref{exactcalculations_failedpostselection} without making approximations.

When the AAV effect gives rise to a significant amplification, there is little to be gained from monitoring other branches: henceforth we shall thus only be concerned with the probability distribution $P_{\textrm{wva}}$ in a suitably chosen `success' branch. In such a situation an important question remains: is the postselected meter wavefunction (carrying almost all of the information about $g$) more robust to technical noise than an un-postselected wavefunction? 
\section{Pixelation}
\label{pixelation}
We are now in a position to introduce realistic imperfections to the scenario. We consider first the effects of pixelation on an arbitrary wavefunction which has been subject to a simple shift. This is of relevance to real weak values, and also to imaginary weak values for a restricted class of momentum space wavefunctions. We then specialize to Gaussian wavefunctions, which serves both as an instructive example and also as the canonical wavefunction used in weak-value amplification experiments for both real and imaginary weak values. 

\subsection{For any wavefunction}
Imagine that the particle (after weak measurement and postselection) impinges on a detector comprising pixels of finite width. To model such a device, we build up a discrete probability mass $\textrm{Pr}(n)$ by dividing the $s$-axis into pixels of size $r_s$. Each pixel carries an integer label $n=\lfloor s/r_s\rceil$ (so that values of $s$ are rounded to the nearest integer multiple of $r_s$). The total Fisher information then becomes
\begin{equation}
F_g[\textrm{Pr}(n)]=\sum_n\frac{1}{\textrm{Pr}(n)}(\partial_g \textrm{Pr}(n))^2 \,.
\end{equation}
The width of the detector is modeled by varying the range of this sum, but here we take it to be infinite.  For this reason any relabeling of the pixels will not change the result. For example, adding a fixed integer to each pixel's label is a bijection $f: \mathbb{Z}\rightarrow\mathbb{Z}$, which preserves the value of the sum in analogy with the above change of integration variable, equation~(\ref{shifter}). The probability of a click in pixel $n$ is
\begin{equation}
\textrm{Pr}(n)= \int_{r_s(n-1/2)}^{r_s(n+1/2)} P(s)\textrm{d}s \,.
\label{pixelate}
\end{equation}
Because of the invariance under a relabeling of the pixels, any shifts in $s$ can now be taken modulo $r_s$. 

To understand the effect of pixelation on $P(s-\nu g)$ we may use the chain rule (\ref{chainrule}) again, taking the derivative under the integral sign:
\begin{align}
F_g[\textrm{Pr}(\lfloor s'/r_s\rceil)]&=\sum_n\frac{\left(\int_{r_s(n-1/2)}^{r_s(n+1/2)}\partial_g P(s') \textrm{d}s\right)^2}{\int_{r_s(n-1/2)}^{r_s(n+1/2)}P(s') \textrm{d}s}\nonumber \\
&=\nu^2\sum_n\frac{\left(\partial_{s'}\int_{r_s(n-1/2)}^{r_s(n+1/2)} P(s') \textrm{d}s\right)^2}{\int_{r_s(n-1/2)}^{r_s(n+1/2)}P(s') \textrm{d}s} \,.
\end{align}
The pixelation effect is almost decoupled from the dependence on $\nu$. However, when making the change of integration variable,
\begin{align}
\textrm{Pr}(\lfloor s'/r_s\rceil)=&\int_{r_s(n-1/2)}^{r_s(n+1/2)}P(s-\nu g)\textrm{d}s \nonumber \\
=&\int_{r_s(n-1/2-\nu g)}^{r_s(n+1/2-\nu g)}P(s')\textrm{d}s' \,,
\end{align}
the finite limits of integration prevent removing the dependence on $\nu$ altogether. Allowing the pixelated detector to be displaced through a controllable quantity $\mu$: 
\begin{align}
\textrm{Pr}(\lfloor s'/r_s\rceil)&=\int_{r_s(n-1/2)}^{r_s(n+1/2)}P(s+\mu-\nu g)\textrm{d}s\nonumber \\
&=\int_{r_s(n-1/2)}^{r_s(n+1/2)}P(s-h)\textrm{d}s \,,
\end{align}
it becomes clear that the importance of $\nu$ is screened by the alignment $h:=(\nu g-\mu)~\textrm{mod~}r_s$. The fraction $h$ indicates the relative alignment of the centroid with the pixel boundaries. For example, if $h=0$ then the mean is aligned exactly in the middle of a pixel; if $h=0.5$ then the mean is on a pixel boundary -- this is shown schematically in the inset of Fig.~\ref{Gaussian_resolution}. One then obtains
\begin{align}
F_g[\textrm{Pr}(\lfloor (s-\nu g) /r_s\rceil)]=\nu^2 F_s[\textrm{Pr}(\lfloor (s-h)/r_s\rceil)].
\end{align}
The relationship of $\nu$ to $h$ is purely incidental: there is no reason to suppose that larger values of $\nu$ will make alignment easier than smaller values. One should therefore treat $h$ identically for the WVA and standard strategies. One can consider adaptive techniques to asymptotically achieve $h = 0.5 $, or -- if alignment control is impossible -- one can average over $h\in[0,0.5]$. This is the first key result of our paper: \emph{Any degradation due to pixelation commutes with the amplification effect.}

Correcting by the success probability and dividing by the standard strategy information, for real weak values one has, taking $h$ identical in both strategies, 
\begin{align}
\frac{qF_g[\textrm{Pr}(\lfloor (x-\textrm{Re}(A_w)g)/r_x\rceil)]}{F_g[\textrm{Pr}(\lfloor (x-\lambda_*g)/r_x\rceil)]}=&\frac{\textrm{Re}(\langle f |\mathbf{A}|i\rangle)^2}{\lambda_*^2}\leq&1 \,.
\end{align}
The pixelation effect has completely cancelled, reducing the problem to that of the ideal detector above -- see~(\ref{ineq}).

For imaginary weak-values, we have
\begin{align}
\frac{qF_g[\textrm{Pr}(\lfloor (k_x-4\Delta_{k_x}\textrm{Im}(A_w)g)/r_{k_x}\rceil)]}{F_g[\textrm{Pr}(\lfloor (x-\lambda_*g)/r_x\rceil)]}\leq\frac{\alpha(r_{k_x},\Delta_{k_x})}{\alpha(r_x,\Delta_x)} \,,
\end{align}
where we define $\alpha(r_s,\Delta_s)=F_s[\textrm{Pr}(\lfloor s \rceil)]/F_s(P(s))$ as the fraction of  information remaining after pixelation, and have applied Eqs.~(\ref{HUP}) and ~(\ref{ineq}). Clearly there will not be a perfect cancellation if the pixelation in the $k_x$ direction is more or less severe than in the $x$ direction: but as we have shown this is independent of the magnitude of the mean. Moreover we argue in the next section that $\alpha$ is monotonically decreasing in $r_s/\Delta_s$, and in fact equal to unity to first order. Then $\alpha(r_{k_x},\Delta_{k_x}):\alpha(r_x,\Delta_x) \approx 1$. 
\subsection{For Gaussian wavefunctions}
It is instructive to fix the input wavefunction to get an idea of exactly how the pixelation affects its informational content.  By letting the initial state be described by 
\begin{equation}
\psi(s)=\sqrt{\frac{1}{\sqrt{2\pi} \Delta_s}}\exp\left(-\frac{s^2}{4\Delta_s^2}\right) \,,
\label{gauss}
\end{equation}
i.e.~a Gaussian centered on the origin with waist $\Delta_s$, one finds 
\begin{equation}
F_s[P(s)]=\frac{1}{\Delta_s^2} \,.
\label{advantage}
\end{equation} 
so that $F_g[P(s')]=\nu^2/\Delta_s^2$.
Under pixelation $P(s')$ becomes a discrete probability mass. By the above arguments it is sufficient to study
\begin{align}
\textrm{Pr}(n) =& \frac{1}{2}\left[\textrm{erf}\left(\frac{r_s}{\Delta_s}\gamma_+\right)-\textrm{erf}\left(\frac{r_s}{\Delta_s}\gamma_-\right)\right] \,,
\end{align} 
where we have introduced convenient quantities $\gamma_\pm=(h-n\pm 0.5)/\sqrt{2}$ and applied Eq.~(\ref{pixelate}).
Define $\alpha$ by dividing the diminished Fisher information of the pixelated probability mass by the un-pixelated information
\begin{align}
\alpha(r_s,\Delta_{r_s})=\frac{F_s[\textrm{Pr}(n)]}{F_s[P(s)]}=&\frac{1}{F_s[P(s)]}\sum_n f_n\, ,
\end{align}
where the summand is to be understood as the pixel-wise information. This function shows how much information remains after pixelation has occurred. We may take the division inside the sum, attaching it to each term. In pixel $n$ 
\begin{equation}
\frac{f_n}{F_s[P(s)]}=\frac{1}{\pi} \frac{ \left(e^{- R^2\gamma_+^2}-e^{- R^2\gamma_-^2} \right)^2}{\text{erf}(R\gamma_+)-\text{erf}(R\gamma_-)}\,,
\label{relativeinformations}
\end{equation}
where we have chosen to define $R:=r_s/\Delta_s$ as an inverse measure of resolution. It represents the granularity of the pixels against the characteristic width of the Gaussian. One expects the ratio of $F_s[\textrm{Pr}(n)]$ to $F_s[P(s)]$ to converge to unity as $R\rightarrow0$. 

All that remains is to sum these relative informations \eqref{relativeinformations} over all pixels.  We numerically plot the dependence of the sum on $R$ in Fig.~\ref{Gaussian_resolution}. The limit when $h=0.5$ and $r_s\rightarrow\infty$ is analytically known: the Fisher information is a fraction $2/\pi$ of the non-pixelated case~\cite{PotzelbergerFelsenstein1993}. This limit corresponds to a perfectly aligned  `split-detector' with only two pixels~\cite{StarlingDixonJordan2009}. Note that in the opposite limit, $\alpha\approx 1$ to first order in $R$, meaning that pixelation is a non-issue as long as the pixel size is at least as small as the width of the wavefunction. Our results agree qualitatively with Ref.~\cite{StrubiBruder2013}, which found that pixelation does not set a lower bound on the magnitude of a shift that can be estimated.

\begin{figure}[t]
\includegraphics[width=\linewidth]{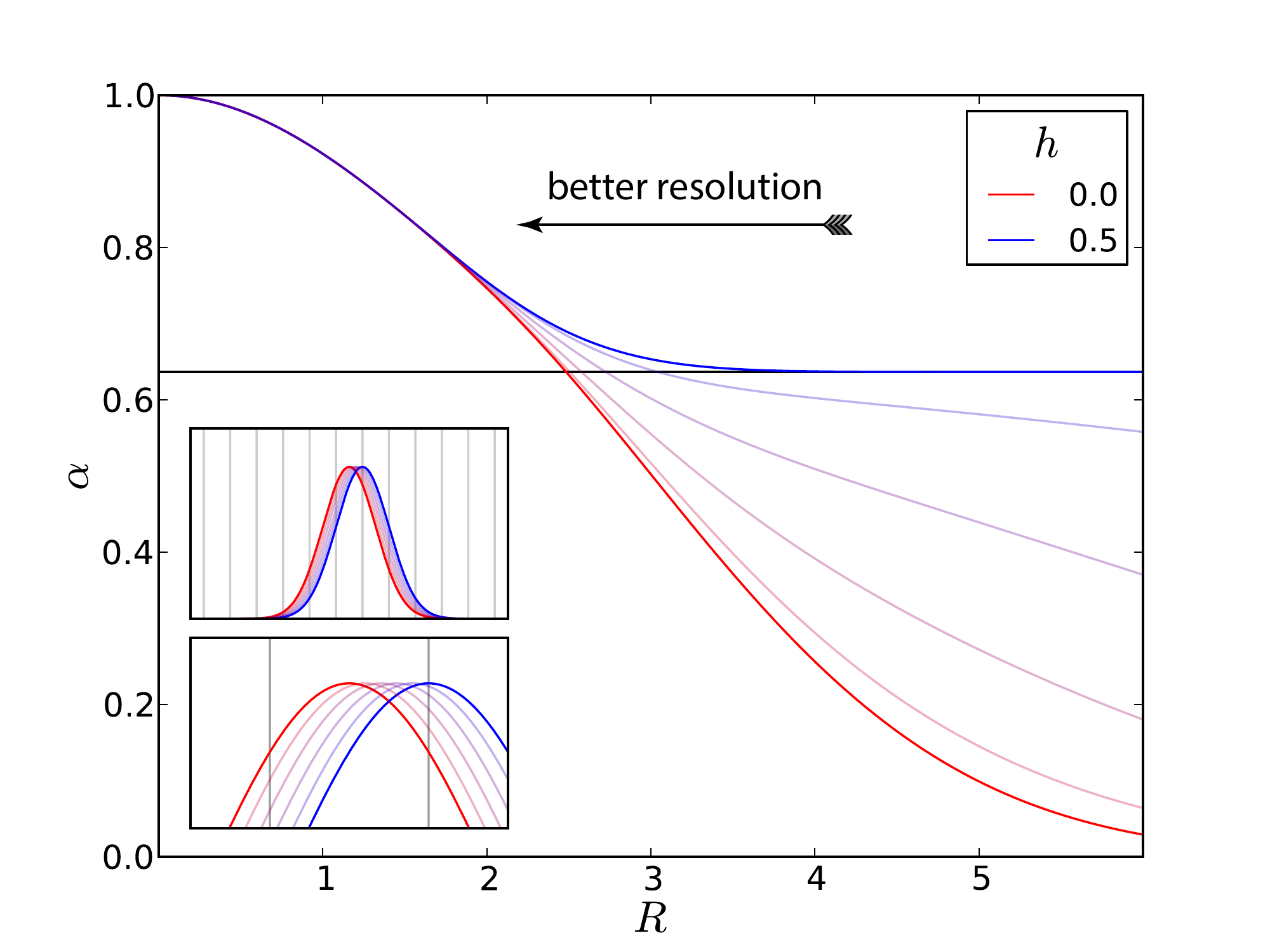}
\caption{\label{Gaussian_resolution}Numerically obtained relationships between relative Fisher information $\alpha$ and inverse resolution $R$. The different curves (blue to red, top to bottom and right to left in the inset) correspond to different alignments which are schematized in the insets. The worst and best cases at $h=0$ and $h=0.5$, respectively, are shown in bold. Aside from misalignment effects (which become important when $R\gtrsim3$) finite resolution does not dramatically limit parameter estimation. In fact with only two pixels the penalty paid is only a loss of about a third of the information, as long as good alignment is possible. The curves are equally relevant for the standard strategy or the WVA strategy.}
\end{figure}
\begin{figure}[t] 
\includegraphics[width=\linewidth]{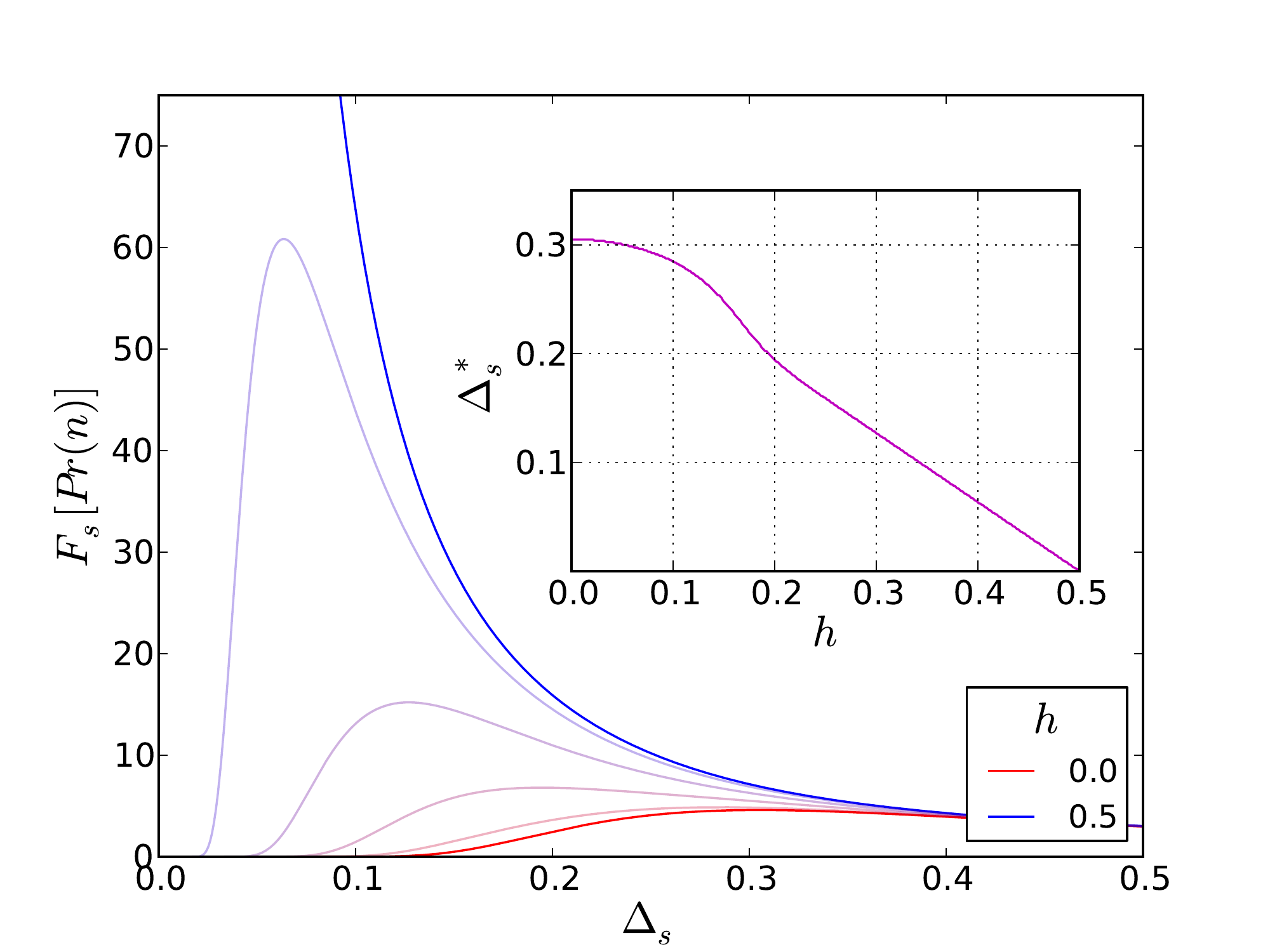}
\caption{\label{uncertainty_vs_alignment}Numerically obtained relationships between Fisher information and Gaussian spread $\Delta_s$ for fixed pixel width $r_s=1$. The bold curves are for $h=0$ (lowermost) and $h=0.5$ (uppermost), with fainter curves interpolating linearly. Without pixelation, narrower Gaussian distributions are superior -- the delta function offers the ultimate precision. However, the misalignment effect introduced with pixelation combines with the low effective resolution as $\Delta_s$ is reduced, preventing the continued improvement. For $\Delta_s \gtrsim r_s /3$ alignment ceases to be important. At any given degree of misalignment (which is described by $h$), there will be an optimum choice $\Delta_s=\Delta_s^*$. The inset shows how this optimum changes with $h$. }
\end{figure}

When pixelation is severe and alignment is poor one does better with a broader statistical distribution than with a narrower one. Taking the wavefunction once more to be Gaussian \eqref{gauss} in Fig.~\ref{uncertainty_vs_alignment} we fix the pixel width and show how the Fisher information depends on $\Delta_s$ in such a way as to offer a trade-off between uncertainty and misalignment. If perfect alignment cannot be achieved, there is a limit to how low $\Delta_s$ may be taken before the worsened effective resolution combines with the bad alignment to kill off the Fisher information. For poor enough resolution, it can thus be advantageous to increase $\Delta_s$ -- introducing extra uncertainty leads to a superior performance. One could increase $\Delta_s$ through a unitary process, to present a broader meter wavefunction for the system to interact with. In circumstances of extreme pixelation, this may be thought of as making the measurement weaker to gain an advantage -- but this is quite distinct from the weak-values formalism (there is no postselection), and should not be associated with AAV's technique. Another option would be to intentionally introduce random displacements to the meter (or detection apparatus) immediately prior to detection. The latter option is an instance of a well-known image- and audio-processing technique known as dithering, and does not affect the strength of the measurement.

\section{Jitter}
\label{jitter}
Here we model another prevalent source of imperfection: random lateral displacements of the particle. Again we begin with an arbitrary wavefunction subject to a simple shift (composed of a $g$ dependent part and also a random part), but then specialize to Gaussian wavefunctions in order to make a concrete connection with the majority of WVA experiments. 
\subsection{For any wavefunction}
We define `jitter' -- random displacements of the measuring apparatus, or equivalently the incident beam -- by convoluting the probability distribution with a suitable noise kernel $\mathcal{N}_s$:
\begin{equation}
\mathcal{N}_s\star P(s):=\int P(s+\mu)\mathcal{N}_s(\mu) \textrm{d}\mu \,.
\end{equation}
This is perhaps the most prominent source of technical noise that has been studied in the WVA context, until now only using the signal-to-noise ratio metric~\cite{BarnettFabreMaitre2003,StarlingDixonJordan2009,Kedem2012,KofmanAshhabNori2012}. In those cases WVA gave a relative advantage, due to identifying the `signal' with the mean of the probability distribution, which becomes invisible when it falls through the noise floor. The Fisher information metric informs us about the performance of a superior estimation strategy, which allows for (in principle) \emph{any} data to contribute to the estimate of the unknown parameter. In our application of the metric, we assume the noise is independent of $g$ so that, by another application of the chain rule (\ref{chainrule}), we have
\begin{align}
F_g[\mathcal{N}_s\star P(s-\nu g)]&=&\int \frac{(\partial_g [ \mathcal{N}_s \star P(s')])^2}{\mathcal{N}_s\star P(s')}\textrm{d}s\nonumber\\
&=&\nu^2\int \frac{(\partial_{s'} [\mathcal{N}_s\star P(s')])^2}{\mathcal{N}_s\star P(s')}\textrm{d}s\nonumber\\
&=&\nu^2F_x[\mathcal{N}_s\star P(s)] \,.
\end{align}
Once more \emph{the application of noise commutes with the amplification factor $\nu^2$}, the second main result of this paper. For real weak values, the noise cancels out in the ratio of Fisher informations:
\begin{equation}
\frac{qF_g[\mathcal{N}_x\star P(s-\textrm{Re}(A_w)g)]}{F_g[\mathcal{N}_x\star P(s-\lambda_*g)]} = \frac{\textrm{Re}(\langle f |\mathbf{A}|i\rangle)^2}{\lambda_*^2}\leq 1\,.
\end{equation}
For imaginary weak values, once more there may not be a complete cancellation due to the possibility of different width wavefunctions and different noise kernels in position and momentum space. We then have
\begin{align}
\frac{qF_g[\mathcal{N}_{k_x}\star P(k_x-4\Delta_{k_x}\textrm{Im}(A_w)g)]}{F_g[\mathcal{N}_x\star P(x-\lambda_*g)]} \leq \frac{\beta(\Delta_{k_x},\mathcal{N}_{k_x})}{\beta(\Delta_{x},\mathcal{N}_{x})}
\end{align}
on defining $\beta(\Delta_s,\mathcal{N}_s)$ as the attenuation factor of Fisher information under jitter: a function which depends on the width of the wavefunction and on the noise kernel in $s$-space (see below). We note that $\beta(\Delta_{k_x},\mathcal{N}_{k_x})$ may or may not be independent of $\beta(\Delta_x,\mathcal{N}_x)$, depending, e.g., on how detection is achieved in the laboratory. Exploiting the relative severity of noise in position and momentum space with imaginary weak values has been proposed by Kedem~\cite{Kedem2012}; but this possibility does not derive from the amplification of the mean.

Note that while we require $\partial_g \mathcal{N}_s=\partial_s \mathcal{N}_s=0$, otherwise the noise kernel $\mathcal{N}_s(\mu)$ can have arbitrary properties (up to normalization and positivity); in particular it may have non-zero mean. 

\subsection{For Gaussian wavefunctions}
As an example, choose the noise kernel to be normally distributed (with zero mean and variance $J_s$), giving
\begin{equation}
\mathcal{N}\star P(s)= \int_{-\infty}^\infty \frac{P(s-\mu)\exp\left(-\frac{1}{2}\frac{\mu^2}{J_s^2}\right)}{\sqrt{2\pi}J_s}\textrm{d}\mu \,.
\end{equation}
If $P(s;\Delta_s)$ is Gaussian with width $\Delta_s$, the application of jitter is equivalent to a redefinition of the spread
\begin{equation}
\mathcal{N}\star P(s-\nu g;\Delta_s)=P(s-\nu g;\sqrt{\Delta_s^2+J_s^2})
\end{equation}
so that the Fisher information takes on a Lorentzian dependence on $\Delta_s$, with scale parameter proportional to $J_s$
\begin{equation}
F_g[\mathcal{N}_s\star P(s-\nu g;\Delta_s)]=\frac{\nu^2}{\Delta_s^2+J_s^2} \,.
\end{equation}
We then find, for this example,
\begin{equation}
\beta(\Delta_s,J_s)=\frac{1}{1+\frac{J_s^2}{\Delta_s^2}}.
\end{equation}
\section{exact calculations}
\label{exactcalculations}
Many experiments have been suggested and performed~\cite{RitchieStoryHulet1991,HostenKwiat2008,BrunnerSimon2010,VizaMartinez-RinconHowland2013,StrubiBruder2013,HofmannGogginAlmeida2012,StarlingDixonJordan2009,XuKedemSun2013} in the regime where the AAV approximation holds. Our discussion above immediately implies that an amplified mean can be no more robust against pixelation or jitter than a standard strategy. However, the AAV approximation ignores certain non-linear effects of the system-meter interaction and can overestimate the centroid of the probability distribution~\cite{Di-Lorenzo2012,KofmanAshhabNori2012, Lorenzo2013}. When the approximation breaks down, the position of the centroid can deviate from $A_w$ and the wavefunction shape changes, sometimes featuring a secondary peak~\cite{DuckStevensonSudarshan1989,WuLi2011,Geszti2010}. For simplicity, and because it is almost exclusively the case in experiments, we restrict the dimension of the system to 2.

We first show that our result for noise-free detection -- that there can be no advantage from WVA -- holds independently of the AAV approximation. Additionally, we provide numerical evidence that the main conclusions of this paper -- with respect to jitter and pixelation -- persist under a non-perturbative treatment of the AAV effect. The arguments are arranged in two subsections: first we choose pre- and postselection to guarantee a real weak value, and then we make a choice to guarantee a pure imaginary weak value. 
\subsection{Real weak-value amplification}
\begin{figure*}
\includegraphics[width=\linewidth]{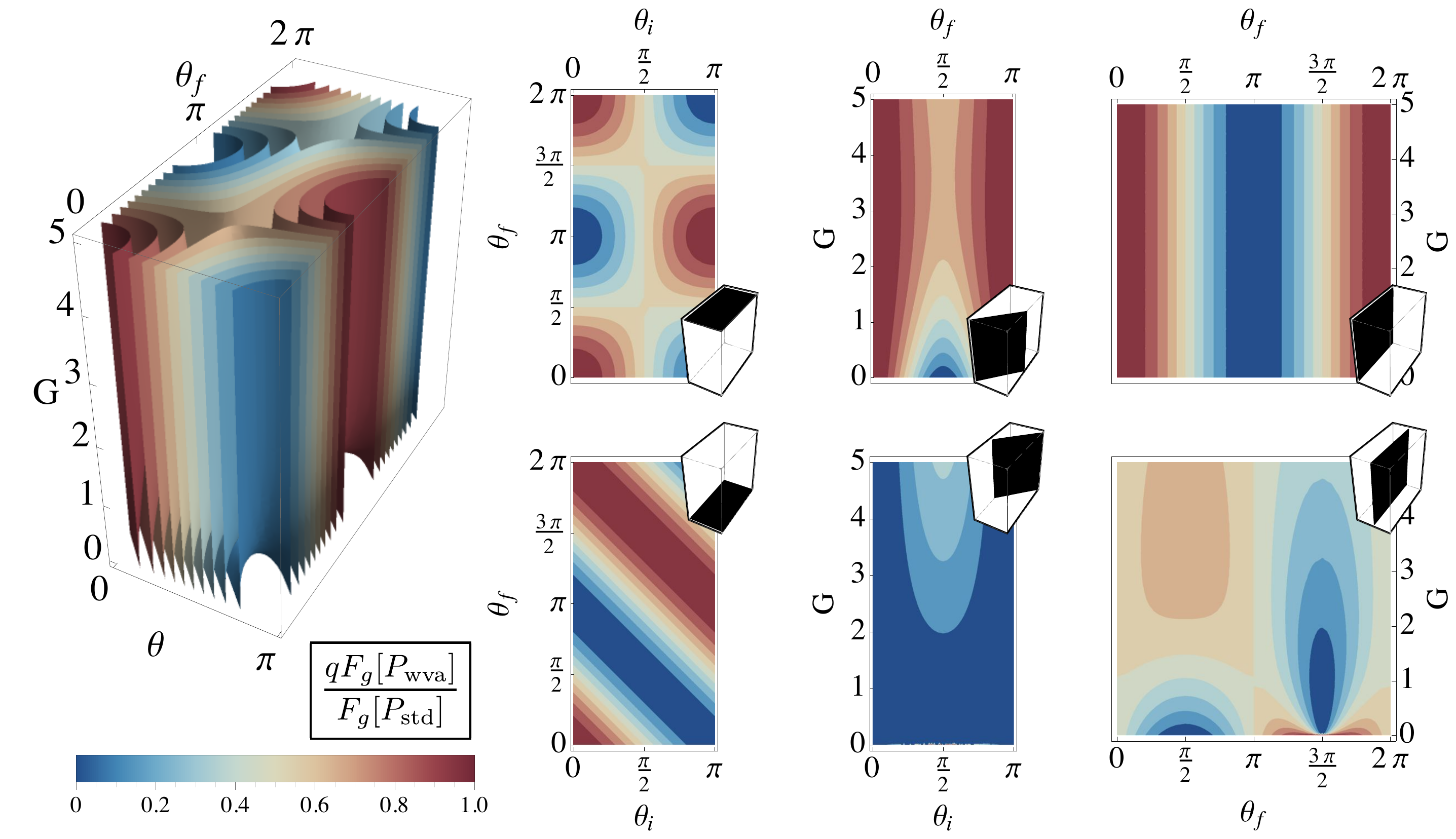}
\caption{\label{full_parameter_space}The corrected Fisher information for the WVA strategy is shown in units of the standard strategy information as a contour plot in the measurement strength $G$ and the pre- and postselection angles $\theta_i,\theta_f$. Two-dimensional slices through this three-dimensional plot are taken for interesting parameter combinations. \textbf{The three uppermost contour plots} indicate parameter combinations for which WVA can mimic the standard technique (red regions). From left to right: In the limit of a strong measurement $G\rightarrow\infty$;  when preselection and postselection are parallel $\theta_i=\theta_f$; and when preselection is into an eigenstate of the control observable $\theta_i=0$. \textbf{The three lowermost contour plots} indicate combinations for which the WVA technique can exhibit characteristically amplified centroids . From left to right: in the limit of a weak measurement $G\rightarrow 0$; when pre- and postselection are chosen orthogonal $\theta_f=\theta_i+\pi/2$; and when preselection is into a state unbiased w.r.t. the control observable $\theta_i=\pi/2$. The three-dimensional plot isosurfaces have colors corresponding to fractional values of $qF_g[P_{\textrm{wva}}]/F_g[P{_\textrm{std}}]$ (the colourmap is given in the legend): in the two-dimensional plots the same colors shade regions between contour lines. The symmetries of the parameter space allow us to restrict $\theta_i\in[0,\pi]$ with no loss in generality.}
\end{figure*}

We make the following parameterization of the internal state of the particle:
\begin{align}
|i\rangle =& \ci |+\rangle + \si |-\rangle \,, \\
|f\rangle =& \cf |+\rangle + \sf |-\rangle \,,
\end{align}
where $\theta_i$ and $\theta_f$ are the pre- and postselection angles, respectively. The relative phases of these states have been set to unity in the eigenbasis of the control observable, $\mathbf{A}|\pm\rangle = \pm |\pm\rangle$. This guarantees a purely real weak value: an optimal choice for detection in the position basis.

With the above parameterization one finds, without approximation and by equation~(\ref{exactwf}), that
\begin{equation}
\psi_{\textrm{wva}}(x)=\frac{1}{\sqrt{q}}\left(\ci\cf\psi_+ + \si\sf\psi_-\right) \,,
\label{exactwfparam}
\end{equation}
where $\psi_\pm=\psi(x\pm g)$.
In the absence of noise, one can find an analytic expression for the ratio of informations available under each strategy, again taking $\psi(x)$ as a Gaussian (\ref{gauss}),
\begin{widetext}
\begin{equation}
\frac{qF_g[P_{\textrm{wva}}]}{F_g[P{_\textrm{std}}]}=\frac{1}{2} \left(1+\cos\theta_f \cos\theta_i -e^{-\frac{G}{2}} \sin\theta_f \sin\theta_i +\frac{G (1+\cos\theta_f \cos\theta_i )}{1+e^{G/2} \textrm{cosec}\,\theta_f \textrm{cosec}\,\theta_i   (1+\cos\theta_f \cos\theta_i )}\right) \,.
\label{ratioexact}
\end{equation}
\end{widetext}
We have defined $G=g^2/\Delta_x^2$ as the \emph{measurement strength}. It is instructive to examine (\ref{ratioexact}) in certain limiting cases, shown in Fig.~\ref{full_parameter_space}. Treating the situation at this level of generality allows for the study of intermediate parameter regimes -- for example when the weak measurement back-action is non negligible~\cite{FeizpourXingSteinberg2011}. The ratio is entirely symmetric with respect to interchanging $\theta_i \leftrightarrow \theta_f$, and is also invariant under $\theta_i+\pi,\theta_f+\pi$. Whilst the ratio (\ref{ratioexact}) can reach unity, it never exceeds it.

To study jitter and pixelation, we performed a numerical search for an advantage under the WVA technique using the exact wavefunction \eqref{exactwfparam}. Firstly, under jitter a numerical maximization of $qF_g[\mathcal{N}_x\star P_{\textrm{wva}}(x)] / F_g[\mathcal{N}_x\star P_{\textrm{std}}(x)]$ over the $\{g,\Delta_x,\theta_i,\theta_f,J_x\}$ parameter space suggests that it remains at unity or below for all combinations. For pixelation, a numerical maximisation tends to find parameter combinations that lead to a ratio greater than one, because of the alignment sometimes being incidentally superior.

\subsection{Imaginary weak-value amplification}
Here we parameterize the qubit by
\begin{align}
|i\rangle =& \frac{1}{\sqrt{2}}( |+\rangle + \textrm{e}^{i\phi_i} |-\rangle) \,, \\
|f\rangle =& \frac{1}{\sqrt{2}}( |+\rangle + \textrm{e}^{i\phi_f}  |-\rangle) \,,
\end{align}
which guarantees a purely imaginary weak value. The position wavefunction is then 
\begin{equation}
\psi_{\textrm{wva}}(x)=\frac{1}{\sqrt{2q}}\left(\psi(x+g)+\textrm{e}^{i(\phi_i-\phi_f)}\psi(x-g)\right)\, ,
\end{equation}
and a Fourier transform yields the momentum wavefunction 
\begin{equation}
\tilde{\psi}_{\textrm{wva}}(k_x)=\frac{1}{\sqrt{2q}}\tilde{\psi}(k_x)\left(\textrm{e}^{-igk_x}+\textrm{e}^{i(\phi_i-\phi_f+gk_x)}\right)\,.
\label{momentumwf}
\end{equation}
We choose the spatial wavefunction to be a Gaussian with waist $\Delta_x$, so that the momentum wavefunction is also Gaussian with waist $\Delta_{k_x}=1/2\Delta_x$. Thus the measurement strength $G=g^2/\Delta_x^2=4g^2\Delta_{k_x}^2$. Computing the corrected ratio of informations leads to 
\begin{equation}
\frac{qF_g[P_{\textrm{wva}}]}{F_g[P{_\textrm{std}}]}=\frac{2 e^{G/2} G \cos (\delta\phi)+2 e^G-\cos (2\delta\phi-1)}{4 e^{G/2} \cos (\delta\phi)+4 e^G}.
\end{equation}
Once more we compare against the standard strategy which measures the particle in the $x$-basis. We have used the shorthand $\delta\phi=\phi_i-\phi_f$: It is clear that only this difference in the relative phase angles matters, and the parameter space consists of only $\{G,\delta\phi\}$ . By inspection the ratio can never exceed unity, see Fig.~\ref{imaginarycontourplot}. 

We performed a numerical search to investigate the effects of jitter on the exact wavefunction, equation~(\ref{momentumwf}). Constraining the search so that the jitter is equally severe in momentum space and position space (this corresponds to $J_x \Delta_{k_x}= J_k \Delta_x=J_{k_x} /2\Delta_{k_x}$), we find the ratio of informations is bounded by unity throughout the $\{g,\Delta_{k_x},\phi_i-\phi_f, J_x, J_{k_x}\}$ space. This confirms the situation that was found in the AAV limit. A similar investigation for pixelation sometimes gives a ratio of higher than unity due to the aforementioned incidental alignment problem.

\begin{figure}
\includegraphics[width=\columnwidth]{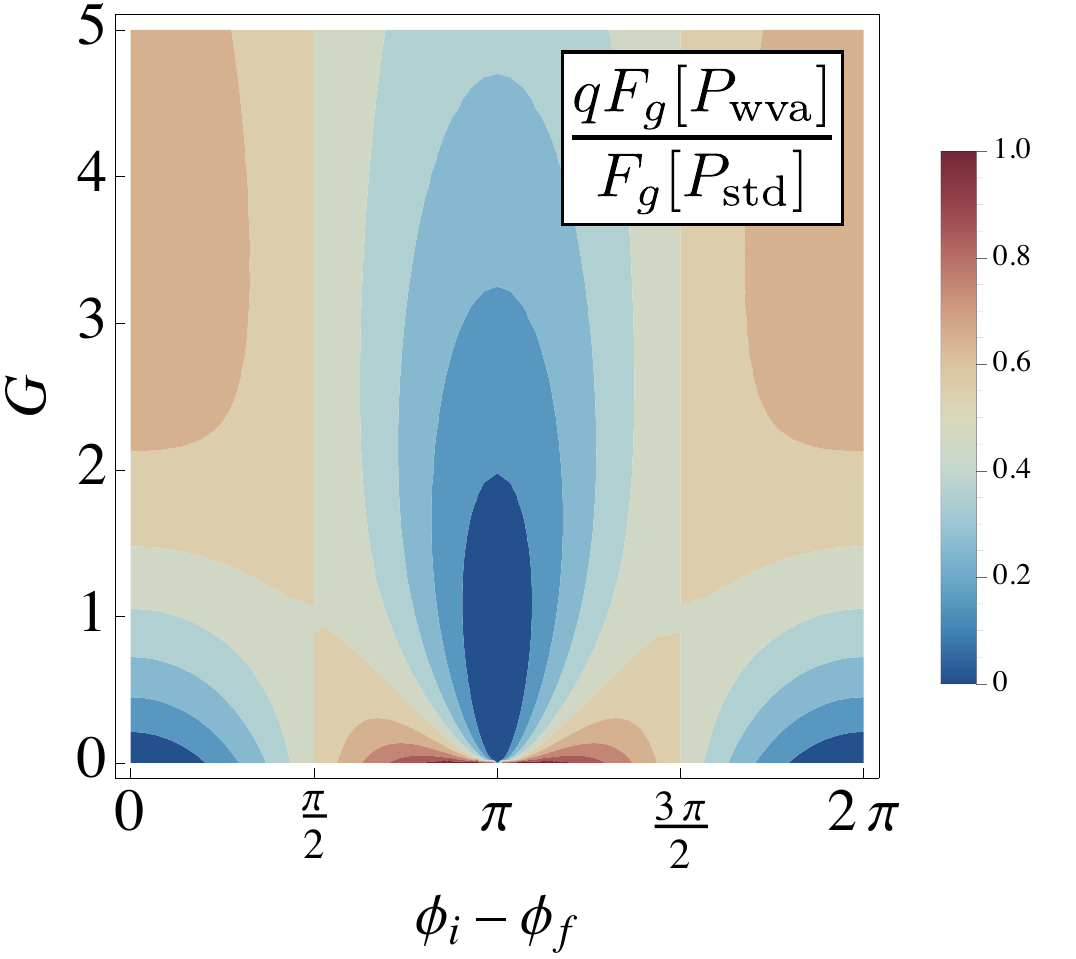}
\caption{\label{imaginarycontourplot}Contour plot of the corrected ratio of Fisher information for the weak-value amplification strategy to Fisher information for the standard strategy, expressed as a function of the measurement strength $G$ and angle between pre- and postselection $\phi_i-\phi_f$. The plot relates to imaginary weak values.}
\end{figure}
\section{exact calculations - failed postselection}
\label{exactcalculations_failedpostselection}
Here we repeat the calculations of Section \ref{failedpostselection} relating to failed postselection without the use of any assumptions: in particular the postselection branch index $b$ is allowed to carry information about $g$, and we include all non-linear effects of the system-meter interaction. We restrict the system to a two-dimensional Hilbert space, so that $b$ is now a binary pass/fail variable. We first present an argument with real weak values, and then an argument for imaginary weak values. In both cases the wavefunction can take an arbitrary shape. In both cases we shall make use of the following fact: if one has a joint distribution
\begin{equation}
\mathcal{P}(b,s)=\delta_{b1}\bar{P}^{\textrm{pass}}+\delta_{b0}\bar{P}^{\textrm{fail}},
\end{equation}
where the bar denotes an unnormalized probability distribution, the Fisher information will be
\begin{align}
F_g\left[\mathcal{P}(b,s)\right]:=&\sum_{b}\int \textrm{d}s \frac{\left(\partial_g \mathcal{P}(b,s)\right)^2}{\mathcal{P}(b,s)}\nonumber\\
=&\int\textrm{d}s\frac{\left(\partial_g \bar{P}^{\textrm{pass}}\right)^2}{\bar{P}^{\textrm{pass}}}\nonumber \\
&+\int \textrm{d}s\frac{\left(\partial_g \bar{P}^{\textrm{fail}}\right)^2}{\bar{P}^{\textrm{fail}}}\,.
\label{jointfisher}
\end{align}
\subsection{Real weak values}
\begin{lemma}
\label{Unnormalised probability density functions}
For $\psi=\sum_i e_i \psi_i$ and $\bar{P}(x)=\psi^2$ a (possibly unnormalized) probability density we have 
\begin{align}
\int \textrm{d}x \left[\left(\partial_g \bar{P}\right)^2/\bar{P}\right]=&\sum_i e_i^2 F_g[\psi_i^2]\nonumber\\
&+4\sum_{i\neq j }e_ie_j\int\textrm{d}x\partial_g\psi_i\partial_g\psi_j\,.
\end{align}
Where $F_g[\bullet]$ is the Fisher information functional. Note that the expression is not an information unless $\bar{P}$ is normalised: otherwise it may be negative.
\end{lemma}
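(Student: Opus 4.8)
The plan is to exploit a cancellation that is special to densities of the form $\bar{P}=\psi^2$: rewriting the integrand in terms of $\psi$ rather than $\bar{P}$ eliminates the denominator entirely. First I would compute $\partial_g\bar{P}=2\psi\,\partial_g\psi$, so that the integrand collapses to
\begin{equation}
\frac{(\partial_g\bar{P})^2}{\bar{P}}=\frac{4\psi^2(\partial_g\psi)^2}{\psi^2}=4(\partial_g\psi)^2,
\end{equation}
and the whole quantity reduces to $4\int\textrm{d}x\,(\partial_g\psi)^2$. This is the crucial step and the only place anything delicate occurs: the reduction relies on $\psi$ being real-valued, so that $\bar{P}=\psi^2$ (not $|\psi|^2$) and the $\psi^2$ factors cancel cleanly. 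In the real-weak-value parameterization this is guaranteed, since each component $\psi_i$ is a real shifted wavefunction and each coefficient $e_i$ is real.

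Next I would substitute the expansion $\partial_g\psi=\sum_i e_i\,\partial_g\psi_i$ and expand the square as a double sum,
\begin{equation}
4(\partial_g\psi)^2=4\sum_{i,j}e_ie_j\,\partial_g\psi_i\,\partial_g\psi_j,
\end{equation}
then split into diagonal ($i=j$) and off-diagonal ($i\neq j$) contributions. The off-diagonal part is already exactly the interference term claimed by the lemma, $4\sum_{i\neq j}e_ie_j\int\textrm{d}x\,\partial_g\psi_i\,\partial_g\psi_j$, so it is simply carried through unchanged.

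Finally I would identify the diagonal terms by running the same cancellation in reverse on a single component: $F_g[\psi_i^2]=\int\textrm{d}x\,(\partial_g(\psi_i^2))^2/\psi_i^2=4\int\textrm{d}x\,(\partial_g\psi_i)^2$, so the diagonal contribution $4\sum_i e_i^2\int\textrm{d}x\,(\partial_g\psi_i)^2$ is precisely $\sum_i e_i^2 F_g[\psi_i^2]$. Combining the two pieces gives the stated identity. There is no genuine obstacle beyond bookkeeping; the one point worth flagging is that the off-diagonal overlaps $\int\textrm{d}x\,\partial_g\psi_i\,\partial_g\psi_j$ need not be positive, which is exactly why the right-hand side fails to be a bona fide information unless $\bar{P}$ is normalised, consistent with the closing remark of the statement.
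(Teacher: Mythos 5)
Your proposal is correct and follows essentially the same route as the paper's own proof: the cancellation $(\partial_g\bar{P})^2/\bar{P}=4(\partial_g\psi)^2$, expansion of $\partial_g\psi=\sum_i e_i\partial_g\psi_i$ into a double sum, and separation into diagonal terms (each equal to $e_i^2F_g[\psi_i^2]$) and off-diagonal cross terms. Your explicit remarks on the reality of $\psi$ and the possible negativity of the cross terms are sound additions but do not change the argument.
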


\begin{proof}
Consider
\begin{equation}
\frac{\left(\partial_g \bar{P}\right)^2}{\bar{P}}=\frac{\left(2\psi\partial_g\psi\right)^2}{\psi^2}=4\left(\partial_g\psi\right)^2\,,
\end{equation}
which follows by the chain rule. Now write
\begin{equation}
\left(\partial_g\psi\right)^2=\left(\sum_ie_i\partial_g\psi_i\right)\left(\sum_je_j\partial_g\psi_j\right)
\end{equation}
by taking the derivative inside the sum. Expanding the sum into diagonal and off diagonal terms, multiplying by four and integrating over $x$ gives the result. 
\end{proof}
\begin{theorem}
Let  $b=1$ when postselection passes and $b=0$ when postselection fails.

Consider the joint probability distribution
\begin{align}
\mathcal{P}(b,x) &=  \delta_{b1}\left( \ci\cf\psi_++\si\sf\psi_-\right)^2\nonumber\\
&+ \delta_{b0}\left( -\ci\sf\psi_++\si\cf\psi_-\right)^2\nonumber\\
&= \delta_{b1}\bar{P}^{\textrm{pass}}+\delta_{b0}\bar{P}^{\textrm{fail}}\,,
\end{align}
where $\psi_\pm$ is a real wavefunction with arbitrary shape that has undergone a positive (negative) shift by some function of $g$. It has Fisher information equal to that found in either $P_\pm:=\psi_\pm^2$.
\end{theorem}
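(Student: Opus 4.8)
The plan is to compute $F_g[\mathcal{P}(b,x)]$ directly from Eq.~(\ref{jointfisher}), which already expresses the joint information as the sum of the pass- and fail-branch contributions $\int\textrm{d}x\,(\partial_g\bar P^{\textrm{pass}})^2/\bar P^{\textrm{pass}}$ and $\int\textrm{d}x\,(\partial_g\bar P^{\textrm{fail}})^2/\bar P^{\textrm{fail}}$. Each integrand is exactly of the form treated by Lemma~\ref{Unnormalised probability density functions}, with $\bar P=\psi^2$ a two-term real superposition of $\psi_+$ and $\psi_-$. I would read off the amplitudes directly: the pass branch has $(e_+,e_-)=(\ci\cf,\,\si\sf)$ and the fail branch has $(e_+,e_-)=(-\ci\sf,\,\si\cf)$, these being the projections of the post-interaction state onto the orthogonal final states $|f\rangle$ and $|f_\perp\rangle$.

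Applying the Lemma to each branch generates diagonal terms $e_+^2 F_g[\psi_+^2]+e_-^2 F_g[\psi_-^2]$ together with a single off-diagonal term $8e_+e_-C$, where $C:=\int\textrm{d}x\,\partial_g\psi_+\partial_g\psi_-$ is the same cross-integral in both branches. The first reduction I would make is to note, via the shift result Eq.~(\ref{shifter}), that $\psi_+$ and $\psi_-$ are the same wavefunction displaced by $\pm$ a function of $g$ with equal statistical velocity, so $F_g[\psi_+^2]=F_g[\psi_-^2]=:F$ --- which is precisely the number the theorem claims for the answer.

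It then remains to add the two branches. The diagonal part collects to $F\big[(\ci\cf)^2+(\si\sf)^2+(\ci\sf)^2+(\si\cf)^2\big]$, and grouping as $\ci^2(\cf^2+\sf^2)+\si^2(\sf^2+\cf^2)$ and using $\cf^2+\sf^2=1$ then $\ci^2+\si^2=1$ collapses this to $F$. The crux is the off-diagonal part, where the two branches together contribute $8C\big[(\ci\cf)(\si\sf)-(\ci\sf)(\si\cf)\big]$: since both products equal $\ci\si\cf\sf$ the bracket vanishes identically. I expect this exact cancellation to be the main point, rather than any hard estimate --- structurally it is the statement that the cross-integral $C$ which individually spoils each branch (the very feature that, per the Lemma's caveat, makes an unnormalised ``probability'' fail to be an honest information) enters the pass and fail branches with opposite sign because $|f\rangle$ and $|f_\perp\rangle$ are orthogonal. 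With the diagonal collection and the off-diagonal cancellation both in hand, one concludes $F_g[\mathcal{P}(b,x)]=F=F_g[P_\pm]$, as stated.
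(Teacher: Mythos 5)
Your proposal is correct and follows essentially the same route as the paper: apply Eq.~(\ref{jointfisher}) together with Lemma~\ref{Unnormalised probability density functions} to each branch, observe that the cross terms $\propto\int\textrm{d}x\,\partial_g\psi_+\partial_g\psi_-$ cancel between the pass and fail branches because $(\ci\cf)(\si\sf)-(\ci\sf)(\si\cf)=0$, and collapse the diagonal coefficients to unity using $F_g[\psi_+^2]=F_g[\psi_-^2]$. Your added remark that the cancellation is forced by the orthogonality of $|f\rangle$ and $|f_\perp\rangle$ is a nice gloss, but the argument itself is the paper's.
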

\begin{proof}
Using Eq.~(\ref{jointfisher}), and by the Lemma, we have 
\begin{align}
F_g\left[\mathcal{P}(b,x)\right]=&\left(\cos^2\frac{\theta_i}{2}\cos^2\frac{\theta_f}{2}+\sin^2\frac{\theta_i}{2}\sin^2\frac{\theta_f}{2}\right)F_g[\psi_+^2]\nonumber\\
+&\left(\cos^2\frac{\theta_i}{2}\sin^2\frac{\theta_f}{2}+\sin^2\frac{\theta_i}{2}\cos^2\frac{\theta_f}{2}\right)F_g[\psi_-^2]\nonumber\\
+&4\left(2\cos\frac{\theta_i}{2}\cos\frac{\theta_f}{2}\sin\frac{\theta_i}{2}\sin\frac{\theta_f}{2}-\right.\nonumber\\
&\left.2\cos\frac{\theta_i}{2}\sin\frac{\theta_f}{2}\sin\frac{\theta_i}{2}\cos\frac{\theta_f}{2}\right)\nonumber\\
&\times\int \textrm{d}x \partial_g\psi_+\partial_g\psi_- \, .
\end{align}
The final term is zero.  By symmetry we have $F_g[\psi_+^2]=F_g[\psi_-^2]$ giving the desired result:

\begin{equation}
F_g[\mathcal{P}(b,x)]=F_g[\psi_\pm^2] =F_x[P(x)]\,.
\end{equation}
The last step follows from Eq.~(\ref{shifter}), and the fact that in this case the eigenvalues of $\mathbf{A}$ are $\pm1$.
\end{proof}
\subsection{Imaginary weak values}
By equation (\ref{momentumwf}), we see that the joint probability distribution is 
\begin{align}
\mathcal{P}(b,k_x)&=\frac{1}{2}|\tilde{\psi}(k_x)|^2\left[\delta_{b1}\left(1+\cos(\phi_i-\phi_f+2gk)\right)\right.\nonumber\\
&+\left.\delta_{b0}\left(1-\cos(\phi_i-\phi_f+2gk)\right)\right].\\
&=\delta_{b1}\bar{P}^{\textrm{pass}}+\delta_{b0}\bar{P}^{\textrm{fail}}.
\end{align}
We can once more employ Eq.~(\ref{jointfisher}) to arrive at 
\begin{align}
F_g[\mathcal{P}(b,k_x)]=&\frac{1}{2}\int \textrm{d}x\frac{(\tilde{P}(k_x))^2}{\tilde{P}(k_x)}\left[\frac{(-2k\sin(2gk+\phi_i-\phi_f))^2}{1+\cos(2gk+\phi_i-\phi_f)}\right.\nonumber\\
&+\left.\frac{(-2k\sin(2gk+\phi_i-\phi_f))^2}{1-\cos(2gk+\phi_i-\phi_f)}\right]\nonumber\\
=&\int \textrm{d}x|\tilde{\psi}(k_x)|^2 (-2k)^2\nonumber\\
=&4\langle k^2 \rangle\nonumber\\
=&F_x[P(x)].
\end{align}
In the last step we employed equation (\ref{HUP}). The above results agree with the approximate analysis of Section~\ref{failedpostselection}. Of course for qubits $\langle \mathbf{A}^2\rangle = 1$. 

\section{Discussion}
\label{discussion}
In this article, we have studied the metrological advantages of weak-value amplification from a parameter estimation perspective, and derived the ultimate limits on uncertainty that apply. Importantly, our analysis encompasses the presence of jitter and of finite detector resolution. Under our choice of metric, and under these prevalent examples of technical imperfection, we have found there to be no advantage whatsoever for real weak-values. 

For imaginary weak-values we have shown that any advantage -- if one exists at all -- will be menial: due, for example, to an (often incidental) mismatch of noise in position and momentum space rather than an anomalously large mean.
We therefore conclude that the `amplification' aspect of the weak-value formalism offers no fundamental benefits over suitable standard strategies. Further analytical and numerical results suggest that our conclusion remains valid outside the regime of validity of the AAV approximation. 

Previously, we have studied WVA in the context of phase estimation, and found that limitations imposed by decoherence are not mitigated by the technique~\cite{KneeBriggsBenjamin2013}. The combination of our earlier result (concerning quantum noise affecting the wavefunction prior to detection), with the findings of this paper (concerning noise and imperfections relating to the detection apparatus itself) restrict the class of noise where one expects superiority of WVA in metrology scenarios. The very recent work of Ferrie and Combes shows that even errors associated with a finite sample size do not favor a weak-value approach~\cite{FerrieCombes2013}. 

Nonetheless, when circumstances make increased use of resources preferable to complicated post-processing, WVA may remain an attractive option. Our results also do not rule out that aspects other than the amplification itself may still be useful in certain experimental circumstances. Alongside the opportunity to induce changes in the direct meter observable (rather than its conjugate), other possibly useful applications include the reduction in signal intensity; either because this reduces the absolute systematic error (e.g.~due to imperfect optics) or because this avoids detector saturation.
However, these effects are quite distinct from the notion of amplification and hence fall outside the purview of this paper. More theoretical work is required in order that their utility be assessed and quantified. 

\acknowledgments{
We thank Simon Benjamin, Andrew Briggs, \mbox{Yaron Kedem}, Abraham Kofman, Jonathan Leach and Konstantin Bliokh for helpful discussions. This work was supported by the EPSRC [grant number EP/P505666/1], and the National Research Foundation and Ministry of Education, Singapore.
}

\bibliography{gck_full_bibliography_minimal}
\end{document}